\newtheorem{defi}{Definition}
\newtheorem{theo}{Theorem}
\newtheorem{hypo}{Hypothesis}
\newcommand{\tmu}{\tilde{\mu}}
\newcommand{\lmmt}{\overline{\lambda}\mu\tmu}
\newcommand{\tez}{\vdash}
\newcommand{\imply}{\Rightarrow}
\newcommand{\fail}{\mathfrak{f}}
\newcommand{\typed}{\rhd}
\newcommand{\fv}{\mathcal{FV}}
\newcommand{\comma}{,}
\newcommand{\fox}{\mathsf{x}}
\newcommand{\foy}{\mathsf{y}}
\newcommand{\fot}{\mathsf{t}}
\newcommand{\fou}{\mathsf{u}}
\newcommand{\lmmtR}{\lmmt_\mathcal{R}}
\newcommand{\nfp}{\!\downarrow_p}
\newcommand{\nfe}{\!\downarrow_e}
\title{Superdeduction in $\lmmt$}
\author{Cl\'ement Houtmann
\institute{INRIA Saclay - \^Ile de France and LIX/\'Ecole Polytechnique, 91128 Palaiseau Cedex}
\email{Clement.Houtmann@inria.fr}
}
\begin{document}
\maketitle

\begin{abstract}
Superdeduction is a method specially designed to ease the use of first-order
theories in predicate logic. The theory is used to enrich the deduction system
with new deduction rules in a systematic, correct and complete way.
A proof-term language and a cut-elimination reduction already exist for
superdeduction, both based on Christian Urban's work on classical sequent
calculus.  However the computational content of Christian Urban's calculus is
not directly related to the ($\lambda$-calculus based) Curry-Howard
correspondence. In contrast the $\lmmt$-calculus is a $\lambda$-calculus for
classical sequent calculus.
This short paper is a first step towards a further exploration of the
computational content of superdeduction proofs, for we extend the
$\lmmt$-calculus in order to obtain a proofterm langage together with a
cut-elimination reduction for superdeduction. We also prove strong
normalisation for this extension of the $\lmmt$-calculus.

\paragraph{Keywords:} Classical sequent calculus, Superdeduction, $\lmmt$-calculus
\end{abstract}

\section{Introduction}

\paragraph{Superdeduction} is an extension of predicate logic designed to ease
the use of first-order theories by enriching a deduction system with new
deduction rules computed from the theory. Once the theory is presented as a
rewrite system, the translation into a set of custom (super)deduction rules is
fully systematic. Superdeduction systems \cite{BraunerHK07} are usually
constructed on top of the classical sequent calculus LK which is described in
Figure \ref{fig:LK}.
\begin{figure}
\[
\begin{array}{c}

\inferrule*[left=Ax]
{ }
{\Gamma, \varphi \tez \varphi, \Delta}

\hspace{5mm}

\inferrule*[left=Cut]
{\Gamma \tez \varphi, \Delta ~~~
\Gamma, \varphi \tez \Delta}
{\Gamma \tez \Delta}

\hspace{5mm}
\inferrule*[left=ContrR]
{\Gamma \tez \varphi, \varphi, \Delta}
{\Gamma \tez \varphi, \Delta}

\hspace{5mm}

\inferrule*[left=ContrL]
{\Gamma, \varphi, \varphi \tez \Delta}
{\Gamma, \varphi \tez \Delta}

\\[2mm]

\inferrule*[left=$\bot$R]
{\Gamma \tez \Delta}
{\Gamma \tez \bot, \Delta}

\hspace{5mm}

\inferrule*[left=$\bot$L]
{ }
{\Gamma, \bot \tez \Delta}

\hspace{5mm}
\inferrule*[left=$\top$R]
{ }
{\Gamma \tez \top, \Delta}

\hspace{5mm}

\inferrule*[left=$\top$L]
{\Gamma \tez \Delta}
{\Gamma, \top \tez \Delta}

\\[2mm]

\inferrule*[left=$\land$R]
{\Gamma \tez \varphi_1, \Delta \\
\Gamma \tez \varphi_2, \Delta}
{\Gamma \tez \varphi_1 \land \varphi_2, \Delta}

\hspace{5mm}

\inferrule*[left=$\land$L]
{\Gamma, \varphi_1, \varphi_2 \tez \Delta}
{\Gamma, \varphi_1 \land \varphi_2 \tez \Delta}

\hspace{5mm}

\inferrule*[left=$\imply$R]
{\Gamma, \varphi_1 \tez \varphi_2, \Delta}
{\Gamma \tez \varphi_1 \imply \varphi_2, \Delta}

\\[2mm]

\inferrule*[left=$\lor$R]
{\Gamma \tez \varphi_1, \varphi_2, \Delta}
{\Gamma \tez \varphi_1 \lor \varphi_2, \Delta}

\hspace{5mm}

\inferrule*[left=$\lor$L]
{\Gamma, \varphi_1 \tez \Delta \\
\Gamma, \varphi_2 \tez \Delta}
{\Gamma, \varphi_1 \lor \varphi_2 \tez \Delta}

\hspace{5mm}

\inferrule*[left=$\imply$L]
{\Gamma \tez \varphi_1, \Delta \\
\Gamma, \varphi_2 \tez \Delta}
{\Gamma, \varphi_1 \imply \varphi_2 \tez \Delta}

\\[2mm]

\inferrule*[left=$\forall$R,right={$\fox \notin
\fv(\Gamma\comma\Delta)$}]
{\Gamma \tez \varphi, \Delta}
{\Gamma \tez \forall \fox . \varphi, \Delta}

\hspace{5mm}

\inferrule*[left=$\forall$L]
{\Gamma, \varphi[\fot \slash \fox] \tez \Delta}
{\Gamma, \forall \fox. \varphi \tez \Delta}

\hspace{5mm}

\inferrule*[left=$\exists$R]
{\Gamma \tez \varphi[\fot \slash \fox], \Delta}
{\Gamma \tez \exists \fox. \varphi, \Delta}

\hspace{5mm}

\inferrule*[left=$\exists$L,right={$ \fox \notin \fv(\Gamma\comma\Delta)$}]
{\Gamma, \varphi \tez \Delta}
{\Gamma, \exists \fox.\varphi \tez \Delta}

\end{array}
\]
\caption{Classical Sequent Calculus LK}\label{fig:LK}
\end{figure}
New deduction rules are computed from a theory presented as a set of
\emph{proposition rewrite rules}, \mbox{i.e.} rewrite rules of the form $P
\rightarrow \varphi$ where $P$ is some atomic formula. Such rewrite rules
actually stand for equivalences $\forall \overline{\fox}. (P \Leftrightarrow \varphi)$
where $\overline{\fox}$ represents the free variables of $P$. The computation of
custom inferences for the proposition rewrite rule $P \rightarrow \varphi$
goes as follows.  On the right, the algorithm decomposes (bottom-up) the
sequent $\tez \varphi$ using LK$\backslash
\{\mbox{Cut},\mbox{ContrR},\mbox{ContrL}\}$ (non-deterministically) until it
reaches a sequence of atomic sequents\footnote{\emph{i.e.} sequents containing
only atomic formul\ae{}} $(\Gamma_i \tez \Delta_i)_{1\leqslant i \leqslant
n}$. During this decomposition, each application of $\exists$L and $\forall$R
corresponds to a side condition $\fox \notin \fv(\Upsilon)$ for some
first-order variable $\fox$ and for some list of formula $\Upsilon$. This
particular decomposition of $\tez \varphi$ then leads to the inference rule
$$
\inferrule*[right=$C$]
{(\Gamma, \Gamma_i \tez \Delta_i, \Delta)_{1\leqslant i\leqslant n}}
{\Gamma \tez P, \Delta}
$$
for introducing $P$ on the right where $C$ is the conjunction of the side
conditions.
On the left, the algorithm similarly decomposes $\varphi \tez$ until it reaches a
sequence of atomic sequents ${(\Gamma'_j \tez \Delta'_j)_{1\leqslant
j \leqslant m}}$ and a conjunction of side conditions $C'$ yielding similarly
the inference rule
$$
\inferrule*[right=$C'$]
{(\Gamma, \Gamma'_j \tez \Delta'_j, \Delta)_{1 \leqslant j \leqslant m}}
{\Gamma, P \tez \Delta}~.
$$
As remarked in \cite{Houtmann08}, this non-deterministic algorithm may return
several inference rules for introducing $P$ respectively on the right or on
the left. One must add all the possible inference rules in order to obtain a
complete superdeduction system.
\begin{defi}[Superdeduction systems \cite{BraunerHK07}]
If $\mathcal{R}$ is a set of proposition rewrite rules, the superdeduction
system associated with $\mathcal{R}$ is obtained by adding to LK all
the inferences which can be computed from the elements of $\mathcal{R}$.
\end{defi}

The paradigmatic example for superdeduction is the system associated with the
proposition rewrite rule $A \subseteq B ~~\rightarrow~~ \forall \fox. (\fox \in A
\imply \fox \in B)$ which defines the inclusion predicate $\subseteq$. This
rewrite rule yields inference rules
$$
\inferrule*[right=$\fox \notin \fv(\Gamma\comma\Delta)$]
{\Gamma, \fox \in A \tez \fox \in B, \Delta}
{\Gamma \tez A \subseteq B, \Delta}
\hspace{5mm}\mbox{and}\hspace{5mm}
\inferrule*[]
{\Gamma, \fot \in B \tez \Delta \\
\Gamma \tez \fot \in A, \Delta}
{\Gamma, A \subseteq B \tez \Delta}~.
$$

As demonstrated in \cite{BraunerHK07}, superdeduction systems are always sound
\emph{w.r.t.} predicate logic. Completeness is ensured whenever right-hand
sides of proposition rewrite rules do not alternate
quantifiers\footnote{Formul\ae{} such as $(\forall \fox. \varphi) \land (\exists
\foy.  \psi)$ are allowed.}. Cut-elimination is more difficult to obtain: several
counterexamples are displayed in \cite{TheseClement}. We have proved in
\cite{Houtmann08} that whenever right-hand sides of proposition rewrite rules
do not contain universal quantifiers and existential quantifiers at the same
time\footnote{Formul\ae{} such as $(\forall \fox. \varphi) \land (\exists \foy.
\psi)$ are \emph{not} allowed.}, cut-elimination in superdeduction is
equivalent to cut-elimination in \emph{deduction modulo} (another formalism
which removes computational arguments from proofs by reasoning modulo
rewriting on propositions \cite{TPM-DHK-JAR-03}).

In the original paper introducing superdeduction \cite{BraunerHK07}, a
proof-term language and a cut-elimination reduction are defined for
superdeduction, both based on Christian Urban's work on classical sequent
calculus \cite{UrbanThese2000}. The reduction is proved to be strongly
normalising on well-typed terms when the set of proposition rewrite rules
$\mathcal{R}$ satisfies the following hypothesis.
\begin{hypo}\label{hypo}
The rewriting relation associated with $\mathcal{R}$ is weakly normalising and
confluent and no first-order function symbol appears in the left-hand sides of
proposition rewrite rules of $\mathcal{R}$.
\end{hypo}

The computational content of Christian Urban's calculus is not directly
related to the (functional) Curry-Howard correspondence whereas the
$\lmmt$-calculus \cite{CurienH00} is a $\lambda$-calculus for sequent
calculus. In order to explore the computational content of superdeduction
inferences, we will define in Section \ref{sec:lmmt+} an extension of the
$\lmmt$-calculus for superdeduction systems and prove the same strong
normalisation result using Hypothesis \ref{hypo}. But before doing so, let us
recall the definition of the $\lmmt$-calculus.

\paragraph{The $\lmmt$-calculus} is defined as follows. In order to avoid
confusion between first-order variables and $\lmmt$ variables, we will use
sans-serif symbols for first-order variables ($\fox,\foy\dots$) and
first-order terms ($\fot,\fou\dots$). Commands,
terms and environments are respectively defined by the grammar in Figure
\ref{subfig:lmmtGrammar}.
The type system is described in Figure \ref{subfig:lmmtTyping}.
Reduction rules are depicted in Figure \ref{subfig:lmmtReductions}.
We have added a constant environment $\fail$ in order to realise falsity. We
also have added constructions $\lambda \fox. \pi$ and $\fot \cdot e$ in order
to realise universal quantifications respectively on the right and on the
left. Implication, universal quantification and falsity are sufficient to
express all the connectives in LK.
\begin{figure}
\centering
\hfill
\subfigure[Grammar]{
\centering
$
\begin{array}{r@{~::=~}l@{\quad}l}
c & \langle \pi | e \rangle & \mbox{(commands)}\\
\pi & x \mid \lambda x. \pi \mid \mu \alpha. c \mid \lambda \fox. \pi &
\mbox{(terms)}\\
e & \alpha \mid \pi \cdot e \mid \tmu x . c \mid \fot \cdot e \mid \fail &
\mbox{(environments)}\\[2mm]
\end{array}
$
\label{subfig:lmmtGrammar}}
\hfill
\subfigure[Reduction]{
\centering
$
\begin{array}{l@{~\rightarrow~}l}
\langle \lambda x. \pi | \pi' \cdot e \rangle &
\langle \pi[\pi'\slash x] | e \rangle \\
\langle \mu \alpha. c | e \rangle &
c[e\slash \alpha] \\
\langle \pi | \tmu x . c \rangle &
c[\pi \slash x]\\
\langle \lambda \fox. \pi | \fot \cdot e \rangle &
\langle \pi[\fot\slash\fox] | e \rangle
\\[2mm]
\end{array}
$
\label{subfig:lmmtReductions}}\hfill~\\
\subfigure[Type System]{
\centering
$
\begin{array}{c}
\inferrule*[]
{ }
{\Gamma, x:A \tez x:A \mid \Delta}
\hspace{1cm}
\inferrule*[]
{ }
{\Gamma \mid \alpha : A \tez \alpha : A, \Delta}
\hspace{1cm}
\inferrule*[]
{\Gamma, x : A \tez \pi : B \mid \Delta}
{\Gamma \tez \lambda x. \pi : A \imply B \mid \Delta}
\\[2mm]
\inferrule*[]
{\Gamma \tez \pi : A \mid \Delta \\ \Gamma \mid e : B \tez \Delta}
{\Gamma \mid \pi \cdot e : A \imply B \tez \Delta}
\hspace{1cm}
\inferrule*[left=Cut]
{\Gamma \tez \pi : A \mid \Delta \\ \Gamma \mid e : A \tez \Delta}
{\langle \pi | e \rangle \typed \Gamma \tez \Delta}
\hspace{1cm}
\inferrule*[]
{ }
{\Gamma \mid \fail : \bot \tez \Delta}
\\[2mm]
\inferrule*[]
{c \typed \Gamma \tez \alpha : A, \Delta}
{\Gamma \tez \mu \alpha. c : A \mid \Delta}
\hspace{8mm}
\inferrule*[]
{c \typed \Gamma, x : A \tez \Delta}
{\Gamma \mid \tmu x. c : A \tez \Delta}
\hspace{8mm}
\inferrule*[right=$\fox \notin \fv(\Gamma\comma\Delta)$]
{\Gamma \tez \pi : A \mid \Delta}
{\Gamma \tez \lambda \fox. \pi : \forall \fox . A \mid \Delta}
\hspace{8mm}
\inferrule*[]
{\Gamma \mid e : A[\fot\slash \fox] \tez \Delta}
{\Gamma \mid \fot \cdot e : \forall \fox. A \tez \Delta}
\\[2mm]
\end{array}
$
\label{subfig:lmmtTyping}
}
\caption{The $\lmmt$-calculus}\label{fig:lmmt}
\end{figure}
The typing rules
$$
\inferrule*[left=FocusR]
{\Gamma \tez \pi : A \mid \Delta}
{\langle \pi | \alpha \rangle \typed \Gamma \tez \alpha : A, \Delta}
\hspace{1cm}\mbox{and}\hspace{1cm}
\inferrule*[left=FocusL]
{\Gamma\mid e : A \tez \Delta}
{\langle x | e \rangle \typed \Gamma, x : A \tez \Delta}
$$
are admissible in the type system of Figure \ref{subfig:lmmtTyping}. Replacing
the Cut rule by FocusR and FocusL yields a type system that we will call
\emph{cut-free $\lmmt$}. It is obviously not equivalent to the original type
system in Figure \ref{subfig:lmmtTyping}. The reduction relation defined in
Figure \ref{subfig:lmmtReductions} is strongly normalising on well-typed terms
as demonstrated in \cite{Polonovski04}.

\paragraph{\bf Notations.} Sequences $(a_i)_{1\leqslant i \leqslant n}$ may be
denoted $(a_i)_i$ or just $\bar{a}$ when the upper bound $n$ can be retrieved
from the context (or is irrelevant). Both notations may even be combined:
$(\bar{a}_i)_i$ represents a sequence of sequences $((a_{j,i})_{1\leqslant j
\leqslant m_i})_{1\leqslant i \leqslant n}$. Finally if $\Gamma = (A_i)_i$ and
$\bar{x} = (x_i)_i$ are respectively a sequence of $n$ formul\ae{} and a
sequence of $n$ variables, then $\bar{x} : \Gamma$ denotes the (typed) context
$x_1 : A_1, x_2 : A_2 \dots$

\section{Extending $\lmmt$}\label{sec:lmmt+}

In the paper introducing superdeduction \cite{BraunerHK07}, Christian Urban's
calculus is presented as a better choice than the $\lmmt$-calculus for a basis
of a proofterm language for superdeduction. In this section, we refute this
claim and demonstrate that the $\lmmt$-calculus is as suitable as Christian
Urban's calculus. Such an extension is a first step towards a
Curry-Howard based computational interpretation of superdeduction, since the
$\lmmt$-calculus relates directly to the $\lambda$-calculus. An inaccuracy of
the original paper \cite{BraunerHK07} is also corrected in the process.
The extension of the $\lmmt$-calculus that we
will present corrects this mistake. The imprecision concerns first-order
quantifications. Indeed a superdeduction inference represents an open
derivation which may contain several quantifier destructions. The structure
organizing these destructions is essential to the definition of the underlying
cut-elimination mechanisms. For instance a sequence $\forall\exists$ on the
right corresponds to the creation of an eigenvariable, say $\fox$, followed by
an instantiation by some first-order term, say $\fot$, which may contain
$\fox$ as a free variable. A sequence $\exists\forall$ on the right
corresponds to an instantiation by some first-order term, say $\fot$, followed
by the creation of an eigenvariable, say $\fox$. In this latter case, $\fot$
is not allowed to contain $\fox$ as a free variable. This distinction is
completely erased in the syntax of the original extension \cite{BraunerHK07}.
It results in an imprecision of the scope of eigenvariables in extended
proofterms: the scope is not explicit in the syntax.
In our extension of the $\lmmt$-calculus, this syntactical imprecision is
corrected by introducing a notion of \emph{trace} which represents the correct
syntax for a precise syntactical representation of the scopes of eigenvariables
in extended proofterms. Then we present a correct cut-elimination
procedure by introducing a notion of \emph{interpretation} for the constructs
of the extended $\lmmt$-calculus relating such constructs to $\lmmt$ proofterms in a
correct way. At the end of the section, a pathological example is depicted to
illustrate the imprecision of the original extension \cite{BraunerHK07} and the
correction of the present extension.

First, let us consider any derivation in LK, potentially unfinished,
\emph{i.e.} with leaves that remain unproven. Since such a derivation is a
tree, there exists a natural partial order on its inferences: an inference
precedes another if the former is placed under the latter. Such a partial order
can easily be extended into a total order (in a non-deterministic way).
Considering only instances of $\forall$R, $\forall$L, $\exists$R and
$\exists$L, such a total order returns a list $L$ of such instances. Each
instance of $\forall$R or $\exists$L corresponds to the use of an
eigenvariable, say $\fox$. Such a use will be denoted $\fox?$. Each instance of
$\forall$L or $\exists$R corresponds to the instantiation of some first-order
variable by a first-order term, say $\fot$. Such a use will be denoted $\fot!$.
The list $L$ becomes a list whose elements are either of the form $\fox?$ or of
the form $\fot!$. Such a list is called a \emph{trace} for the derivation.

Let us consider a proposition rewrite rule $r: P \rightarrow \varphi$ leading
to the superdeduction inferences
$$
\inferrule*[right=$C$]
{(\Gamma, \Gamma_i \tez \Delta_i, \Delta)_i}
{\Gamma \tez P, \Delta}
\hspace{5mm}\mbox{and}\hspace{5mm}
\inferrule*[right=$C'$]
{(\Gamma, \Gamma'_j \tez \Delta'_j, \Delta)_j}
{\Gamma, P \tez \Delta}~.
$$
Let us consider the first one. Since it is derived from inferences of LK,
there exists a derivation of $\tez \varphi$ with open leaves $(\Gamma_i \tez
\Delta_i)_i$ in LK \cite[Property 6.1.3]{TheseClement}. Let $L$ be a trace for
this derivation. Then the superdeduction inference introducing $P$ on the right
is turned into the typing rule
$$
\inferrule*[left=rR,right=$C$]
{(c_i \typed \Gamma, \overline{x}_i:\Gamma_i \tez \overline{\alpha}_i:\Delta_i, \Delta)_i}
{\Gamma \tez r(L,(\mu_i(\overline{x}_i,\overline{\alpha}_i).c_i)_i) : P \mid \Delta}~.
$$
Here variables $\overline{x}_i$ and $\overline{\alpha}_i$ are bound in $c_i$
for each $i$.
Similarly we obtain a corresponding trace $L'$ for the superdeduction
inference introducing $P$ on the left which is turned into the typing rule
$$
\inferrule*[left=rL, right=$C'$]
{(c'_j \typed \Gamma, \overline{y}_j:\Gamma'_j \tez \overline{\beta}_j:\Delta'_j, \Delta)_j}
{\Gamma \mid r(L',(\tmu_j(\overline{y}_j,\overline{\beta}_j).c'_j)_j) : P \tez \Delta}~.
$$
Here variables $\overline{y}_j$ and $\overline{\beta}_j$ are bound in $c'_j$
for each $j$.
For example, the inference rules for $\subseteq$ are turned into
$$
\inferrule*[right=$\fox \notin \fv(\Gamma\comma\Delta)$]
{c \typed \Gamma, x: \fox \in A \tez \alpha:\fox \in B, \Delta}
{\Gamma \tez r(\fox?, \mu (x,\alpha). c) : A \subseteq B, \Delta}
\hspace{1em}\mbox{and}\hspace{1em}
\inferrule*[]
{c_1 \typed \Gamma, x : \fot \in B \tez \Delta \\
c_2 \typed \Gamma \tez \alpha: \fot \in A, \Delta}
{\Gamma, r(\fot!, \tmu_1 (x).c_1, \tmu_2 (\alpha).c_2) :A \subseteq B \tez \Delta}~.
$$
If $\mathcal{R}$ is a set of proposition rewrite rules, the type system
resulting of extending the type system of Figure \ref{subfig:lmmtTyping} with
the typing rules for $\mathcal{R}$ is denoted $\lmmtR$.

We must now define how cuts of the form
$$\langle~
r(L,(\mu_i(\overline{x}_i,\overline{\alpha}_i).c_i)_i) ~|~
r(L',(\tmu_j(\overline{y}_j,\overline{\beta}_j).c'_j)_j) ~\rangle$$
are reduced. Such reductions are computed using \emph{open $\lmmt$}, a type
system for derivations with open leaves\footnote{\emph{i.e.} leaves that remain unproven} in the $\lmmt$-calculus type system. An open leaf is
represented by a \emph{variable command} (symbols $X,Y\dots$). The types of
such variables have the same shape as the types of usual commands in
$\lmmt$-calculus: full sequents $\Gamma \tez \Delta$. Therefore typing in
open $\lmmt$ is performed in a context $\Theta$ which contains a list of typed
variable commands of the form $X \typed \Gamma \tez \Delta$. As usual,
variable commands are allowed to appear only once in such contexts. Typing
judgements are denoted
$$
\begin{array}{r@{\qquad}l@{\qquad}l}
& \Theta \Vdash c \typed \Gamma \tez \Delta & \mbox{when typing a command;} \\
& \Theta \Vdash \Gamma \tez \pi : A \mid \Delta & \mbox{when typing a term} \\
\mbox{and} & \Theta \Vdash \Gamma \mid e : A \tez \Delta & \mbox{when typing an environment.}
\end{array}
$$
Open $\lmmt$ is obtained by extending cut-free $\lmmt$ to such
judgements and by adding the typing rule
$$
\inferrule*[left=Open]
{ }
{\Theta; X \typed S \Vdash X \typed S}~.
$$
For example, Figure \ref{fig:example} contains a derivation of
$$
\begin{array}{l}
X \typed x:C \tez \alpha:D~ ; ~
Y \typed \tez \alpha:D, \beta:B \Vdash
\langle \lambda y. \mu \alpha. \langle y | (\mu \beta . Y) \cdot (\tmu x. X)
\rangle | \gamma \rangle
~\typed ~ (\tez \gamma : (B \imply C) \imply D)~.
\end{array}
$$
(where the prefix 
$
X \typed x:C \tez \alpha:D~;~
Y \typed \tez \alpha:D, \beta:B \Vdash
$ is omitted for readability.)
\begin{figure}
$$
\inferrule*[]
{
	\inferrule*[]
	{
		\inferrule*[]
		{
			\inferrule*[]
			{
				\inferrule*[leftskip=1.5em, rightskip=1.5em]
				{
					\inferrule*[]
					{\inferrule*[Left=Open]{ }{Y\typed \tez \beta : B, \alpha : D}}
					{\tez \mu \beta . Y : B \mid \alpha :D} \\
					\hspace{3em}
					\inferrule*[]
					{\inferrule*[Left=Open]{ }{X \typed x : C \tez \alpha :D}}
					{\mid \tmu x. X : C \tez \alpha :D}
				}
				{\mid (\mu \beta . Y) \cdot (\tmu x. X) : (B
\imply C) \tez \alpha  : D}
			}
			{\langle y | (\mu \beta . Y) \cdot (\tmu x. X) \rangle \typed y : (B
\imply C) \tez \alpha  : D}
		}
		{y : (B \imply C) \tez \mu \alpha. \langle y | (\mu \beta . Y) \cdot (\tmu x. X)
\rangle : D \mid}
	}
	{\tez \lambda y. \mu \alpha. \langle y | (\mu \beta . Y) \cdot (\tmu x. X)
\rangle : (B \imply C) \imply D \mid}
}
{\langle \lambda y. \mu \alpha. \langle y | (\mu \beta . Y) \cdot (\tmu x. X) \rangle | \gamma \rangle \typed \tez \gamma : (B \imply C) \imply D}
$$
\caption{Typing in open $\lmmt$}\label{fig:example}
\end{figure}

The reduction in Figure \ref{subfig:lmmtReductions} is extended to open
$\lmmt$ by simply defining how subtitutions behave on command variables ($X[t
\slash x]$, $X[e \slash \alpha]$ or $X[\fot \slash \fox]$): they are turned
into \emph{delayed substitutions}, \emph{i.e.} syntactic constructions,
denoted $X\{t \slash x\}$, $X\{e \slash \alpha\}$ or $X\{\fot \slash \fox\}$,
which will be turned back into primitive substitutions once $X$ is
instanciated.

A typing derivation in open $\lmmt$ obviously corresponds to a derivation in
LK (with open leaves). If $K$ is a typed command, term or environment, then
a trace for K is a trace for the derivation corresponding to $K$.
Let us reconsider our extended terms
$$r(L,(\mu_i(\overline{x}_i,\overline{\alpha}_i).c_i)_i)\qquad\mbox{ and }\qquad
r(L',(\tmu_j(\overline{x}_j,\overline{\alpha}_j).c'_j)_j)$$ and their respective typing
rules rR and rL. The sets
$$
\widetilde{\mbox{rR}} = 
\left\{ \pi \quad \Big\slash \quad
\begin{array}{l}
(X_i \typed (\overline{x}_i : \Gamma_i \tez \overline{\alpha}_i : \Delta_i))_i \Vdash
~\tez \pi : \varphi \mbox{ well-typed in open $\lmmt$ } \\
\mbox{and }L \mbox{ is a trace for } \pi
\end{array}
\right\}
$$
and
$$
\widetilde{\mbox{rL}} = 
\left\{ e \quad \Big\slash \quad
\begin{array}{l}
(Y_j \typed (\overline{y}_j : \Gamma'_j \tez \overline{\beta}_j : \Delta'_j))_j \Vdash e
: \varphi \tez \mbox{ well-typed in open $\lmmt$ } \\
\mbox{and }L'\mbox{ is a trace for } e
\end{array}
\right\}
$$
are both non-empty:
Indeed by construction of the superdeduction inference rules, we know that
there exists a derivation in LK of $\tez \varphi$ (resp.  $\varphi \tez$) from
premisses $(\Gamma_i \tez \Delta_i)_i$ (resp. $(\Gamma'_j \tez \Delta'_j)_j$)
such that $L$ (resp. $L'$) is a trace for this derivation. Therefore by logical
completeness of (open) $\lmmt$, there exists at least one term in
$\widetilde{\mbox{rR}}$ (resp. one environment in $\widetilde{\mbox{rL}}$).
Each term $\pi \in \widetilde{\mbox{rR}}$ intuitively represents
$r(L,(\mu_i(\overline{x}_i,\overline{\alpha}_i).X_i)_i)$ in open $\lmmt$. Each
environment $e \in \widetilde{\mbox{rL}}$ intuitively represents
$r(L',(\tmu_j(\overline{x}_j,\overline{\alpha}_j).c'_j)_j)$ in open $\lmmt$.
Therefore whenever $\pi$ and $e$ are respectively in $\widetilde{\mbox{rR}}$
and $\widetilde{\mbox{rL}}$, any normal form $\langle \pi | e \rangle$ can be
chosen as a direct reduct of
$$
\langle \quad r(L,(\mu_i(\overline{x}_i,\overline{\alpha}_i).c_i)_i)\quad | \quad
r(L',(\tmu_j(\overline{x}_j,\overline{\alpha}_j).c'_j)_j) \quad \rangle~.
$$
We suppose that for each typing rule rR (resp. rL) one
specific $\pi \in \widetilde{\mbox{rR}}$ (resp. one specific $e \in
\widetilde{\mbox{rL}}$) is distinguished. This term (resp. this environment)
is called the \emph{interpretation} of
$r(L,(\mu_i(\overline{x}_i,\overline{\alpha}_i).X_i)_i)$ (resp.
$r(L',(\tmu_j(\overline{x}_j,\overline{\alpha}_j).Y_j)_j)$).
Then for each normal form $c$ of $\langle \pi | e \rangle$, the rule
$$
\langle (
r(L,(\mu_i(\overline{x}_i,\overline{\alpha}_i).c_i)_i) |
r(L',(\tmu_j(\overline{y}_j,\overline{\beta}_j).c'_j)_j) \rangle \quad \rightarrow \quad
c[(c_i \slash X_i)_i,(c'_j \slash Y_j)_j]
$$
is added to the cut-elimination reduction (delayed
substitutions $\{\cdot \slash \cdot \}$ are replaced in $c$ by primitive
substitutions $[\cdot \slash \cdot]$).

Let us reconsider the inclusion example. The term $\pi = \lambda \fox .\lambda
x.  \mu\alpha. X$ is a potential interpretation of $r(\fox?, \mu (x,\alpha).
c)$. Indeed
$$X \typed x : \fox \in A \tez \fox \in B \Vdash \tez \pi : \forall
\fox. \fox \in A \imply \fox \in B \mid$$
is well-typed in open $\lmmt$ as demonstrated in Figure
\ref{subfig:openlmmtinclusionright} and $\fox?$ is a trace for $\pi$.
The environment $e = \fot \cdot (\mu \beta. Y) \cdot (\tmu y. Z)$ is a
potential interpretation of $r(\fot!, \tmu_1 (y).c_1, \tmu_2 (\beta).c_2)$.
Indeed
$$Y \typed \tez \beta : \fot \in A, Z \typed y : \fot \in B \Vdash \mid e :
\forall \fox. \fox \in A \imply \fox \in B \tez$$
is well-typed in open $\lmmt$ as demonstrated in Figure
\ref{subfig:openlmmtinclusionleft} and $\fot!$ is a trace for $e$.
\begin{figure}[t]
\centering
\subfigure[Typing $\lambda \fox .\lambda x.  \mu\alpha. X$]{
\centering
\begin{tabular}{c}
$
\inferrule*[]
{
	\inferrule*[]
	{
		\inferrule*[]
		{
			\inferrule*[]
			{
				\inferrule*[Left=Open]
				{ }
				{X \typed x : \fox \in A \tez \fox \in B \Vdash X \typed x : \fox \in A \tez
					\alpha : \fox \in B}
			}
			{X \typed x : \fox \in A \tez \fox \in B \Vdash x : \fox \in A \tez
			\mu\alpha. X : \fox \in B \mid}
		}
		{X \typed x : \fox \in A \tez \fox \in B \Vdash x : \fox \in A \tez
			\mu\alpha. X : \fox \in B \mid}
	}
	{X \typed x : \fox \in A \tez \fox \in B \Vdash \tez \lambda x.
		\mu\alpha. X : \fox \in A \imply \fox \in B \mid}
}
{X \typed x : \fox \in A \tez \fox \in B \Vdash \tez \lambda \fox .\lambda x.
\mu\alpha. X : \forall \fox. \fox \in A \imply \fox \in B \mid}
$
\\[2mm]
\end{tabular}
\label{subfig:openlmmtinclusionright}
}
\subfigure[Typing $\fot \cdot (\mu \beta. Y) \cdot (\tmu y. Z)$]{
\centering
\begin{tabular}{c}
$
\inferrule*[]
{
	\inferrule*[]
	{
		\inferrule*[vdots=4em, leftskip=5em, rightskip=25em]
		{
			\inferrule*[Left=Open]
			{ }
			{Y \typed \tez \beta : \fot \in A; Z \typed y : \fot \in B \Vdash Y
				\typed \tez \beta : \fot \in A}
		}
		{Y \typed \tez \beta : \fot \in A; Z \typed y : \fot \in B \Vdash \tez \mu
			\beta. Y : \fot \in A \mid} \quad \\
		\inferrule*[]
		{
			\inferrule*[Left=Open]
			{ }
			{Y \typed \tez \beta : \fot \in A; Z \typed y : \fot \in B \Vdash Z
				\typed y : \fot \in B \tez}
		}
		{Y \typed \tez \beta : \fot \in A; Z \typed y : \fot \in B \Vdash \mid
			\tmu y. Z : \fot \in B \tez}
	}
	{Y \typed \tez \beta : \fot \in A; Z \typed y : \fot \in B \Vdash \mid (\mu
		\beta. Y) \cdot (\tmu y. Z) : \fot \in A \imply \fot \in B \tez}
}
{Y \typed \tez \beta : \fot \in A, Z \typed y : \fot \in B \Vdash \mid \fot
	\cdot (\mu \beta. Y) \cdot (\tmu y. Z) : \forall \fox. \fox \in A \imply
	\fox \in B \tez}
$
\\[2mm]
\end{tabular}
\label{subfig:openlmmtinclusionleft}}
\caption{Typing interpretations for inclusion}
\end{figure}
The cut
$$
\langle \lambda \fox .\lambda x.  \mu\alpha. X| \fot \cdot (\mu \beta. Y) \cdot
(\tmu y. Z)\rangle
$$
has two normal forms, namely
$$
X\{\fot \slash \fox\}\{(\mu\beta. Y)\slash x\}\{\tmu y. Z \slash \alpha\}
\qquad\mbox{and}\qquad
Z\{\mu \alpha. X\{\fot \slash \fox\}\{(\mu\beta. Y)\slash x\} \slash y\}~.
$$
Therefore a cut
$$
\langle r(\fox?, \mu (x,\alpha).  c)| r(\fot!, \tmu_1 (y).c_1, \tmu_2
(\beta).c_2)\rangle
$$
reduces to
$$
c[\fot \slash \fox][(\mu\beta. c_2)\slash x][\tmu y. c_1 \slash \alpha]
\qquad\mbox{and}\qquad
c_1[\mu \alpha. c[\fot \slash \fox][(\mu\beta. c_2)\slash x] \slash y]~.
$$

If $\mathcal{R}$ is a set of proposition rewrite rules, the reduction relation
of Figure \ref{subfig:lmmtReductions} extended by the reduction rules for
$\mathcal{R}$ will be denoted $\rightarrow_{\lmmtR}$.

\begin{theo}[Subject Reduction]\label{theo:sr}
For all $\mathcal{R}$, typability in $\lmmtR$ is preserved by reduction
through $\rightarrow_{\lmmtR}$.
\end{theo}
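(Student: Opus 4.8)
The plan is to prove subject reduction first for the top-level redexes and then to propagate it through all contexts by a routine induction on the typing derivation, inverting at each command, term and environment constructor (including at rR and rL, whose premises are plain $\lmmtR$ judgements). The substance is concentrated in a handful of structural lemmas, stated both for $\lmmtR$ and for open $\lmmt$ and each proved by induction on the typing derivation: \emph{(i)}~weakening with fresh (co)variables and, in open $\lmmt$, with fresh variable commands; \emph{(ii)}~term substitution, \emph{i.e.} from $\Gamma \tez \pi : A \mid \Delta$, substituting $\pi$ for a variable $x:A$ in a command, term or environment; \emph{(iii)}~the symmetric environment substitution for $\Gamma \mid e : A \tez \Delta$ and a co-variable $\alpha:A$; \emph{(iv)}~first-order substitution, \emph{i.e.} replacing $\fox$ by $\fot$ in a judgement, with $\fot$ substituted for $\fox$ in all types as well; and \emph{(v)}~a type-directed weakening adding a fixed $\Gamma_0$ on the left and $\Delta_0$ on the right of every sequent of a whole derivation, licit as soon as the eigenvariables used in that derivation avoid $\fv(\Gamma_0,\Delta_0)$. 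I would also record that FocusR and FocusL, the admissible rules that type cut-free commands, remain admissible in $\lmmtR$.

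For the four reductions of Figure \ref{subfig:lmmtReductions} I expect to simply replay the standard $\lmmt$ argument (\cite{CurienH00,Polonovski04}): these redexes lie entirely within the $\lmmt$ fragment, so the $\mathcal{R}$-specific rules cannot interfere; one inverts the Cut that types the redex, reads off the typings of the immediate subterms, and concludes with the matching substitution lemma --- \emph{(ii)} for $\langle \lambda x.\pi | \pi'\cdot e\rangle$ and $\langle \pi | \tmu x.c\rangle$, \emph{(iii)} for $\langle \mu\alpha.c | e\rangle$, and \emph{(iv)} for $\langle \lambda\fox.\pi | \fot\cdot e\rangle$, where the eigenvariable condition $\fox\notin\fv(\Gamma,\Delta)$ attached to the $\forall$R rule that types $\lambda\fox.\pi$ guarantees the ambient context is left untouched by the substitution.

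The genuinely new case is a cut $\langle r(L,(\mu_i(\overline{x}_i,\overline{\alpha}_i).c_i)_i) | r(L',(\tmu_j(\overline{y}_j,\overline{\beta}_j).c'_j)_j)\rangle \typed \Gamma \tez \Delta$ rewriting to $c[(c_i/X_i)_i,(c'_j/Y_j)_j]$, where $c$ is a chosen normal form of $\langle \pi | e\rangle$ for the distinguished interpretations $\pi$ and $e$. Inverting the Cut and then rR and rL yields $\lmmtR$-derivations of $c_i \typed \Gamma,\overline{x}_i:\Gamma_i \tez \overline{\alpha}_i:\Delta_i,\Delta$ and $c'_j \typed \Gamma,\overline{y}_j:\Gamma'_j \tez \overline{\beta}_j:\Delta'_j,\Delta$. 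By the definition of the interpretations, $(X_i \typed \overline{x}_i:\Gamma_i \tez \overline{\alpha}_i:\Delta_i)_i \Vdash ~\tez \pi : \varphi$ and $(Y_j \typed \overline{y}_j:\Gamma'_j \tez \overline{\beta}_j:\Delta'_j)_j \Vdash e : \varphi \tez$ in open $\lmmt$; merging the two variable-command contexts and applying Cut in the variant of open $\lmmt$ with Cut restored gives a derivation of $\langle \pi | e\rangle \typed (\tez)$. Subject reduction for that variant --- which rests on the very lemmas \emph{(i)}--\emph{(iv)}, delayed substitutions being simply carried along until their command variable is instantiated --- keeps $c$ typable; and since $c$ is a normal form, an inspection of which commands $\langle \pi' | e'\rangle$ are simultaneously irreducible and typable ($\pi'$ a variable, $e'$ a co-variable, or a delayed-substitution leaf) shows that $c$ is already typable in open $\lmmt$ proper, with the same variable-command context and conclusion $(\tez)$. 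Type-directed weakening \emph{(v)} with $\Gamma_0=\Gamma$ and $\Delta_0=\Delta$ --- permitted because the side conditions $C$ and $C'$ force the eigenvariables of $\pi$ and $e$, hence of $c$, away from $\fv(\Gamma,\Delta)$ --- upgrades this to $(X_i \typed \Gamma,\overline{x}_i:\Gamma_i \tez \overline{\alpha}_i:\Delta_i,\Delta)_i ; (Y_j \typed \Gamma,\overline{y}_j:\Gamma'_j \tez \overline{\beta}_j:\Delta'_j,\Delta)_j \Vdash c \typed \Gamma \tez \Delta$. The declared types of the $X_i$ and $Y_j$ now match those of the $c_i$ and $c'_j$, so a final lemma --- substitution of variable commands, proved by induction on the open-$\lmmt$ derivation, with the Open rule plugging in the $\lmmtR$-derivation of the relevant $c_i$ or $c'_j$ and with each surviving delayed substitution $X_i\{\fot/\fox\}$, $X_i\{e'/\alpha\}$, $X_i\{\pi'/x\}$ discharged into a genuine substitution via \emph{(iv)}, \emph{(iii)}, \emph{(ii)} --- produces $c[(c_i/X_i)_i,(c'_j/Y_j)_j] \typed \Gamma \tez \Delta$ in $\lmmtR$, as required.

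The main obstacle I anticipate is precisely this last case, and inside it the task of bridging the \emph{local} open-$\lmmt$ typing of the interpretations (over the decomposition contexts $\Gamma_i \tez \Delta_i$) with the \emph{global} $\lmmtR$ judgement over $\Gamma \tez \Delta$: one has to know that a typable command in normal form needs no cut (so that the chosen normal form $c$ genuinely lives in open $\lmmt$), that the type-directed weakening never clashes with an eigenvariable (which is exactly what $C$ and $C'$ secure), and that every delayed substitution surviving in $c$ can be discharged once its command variable becomes a concrete $\lmmtR$ command. The four $\lmmt$ reductions and the congruence closure are, by comparison, routine.
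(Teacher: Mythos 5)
Your proof follows essentially the same route as the paper's: reduce to the supercut case, use the open-$\lmmt$ typings of the interpretations, invoke subject reduction in open $\lmmt$ to type the chosen normal form $c$, and conclude with a substitution lemma for command variables (the basic $\lmmt$ redexes being routine). You are in fact more explicit than the paper on two points it leaves implicit --- that $\langle \pi | e \rangle$ needs Cut restored in open $\lmmt$ while its normal form $c$ lives in the cut-free system, and that the weakening by $\Gamma, \Delta$ is licensed by the eigenvariable conditions $C$ and $C'$ --- but these are refinements of, not departures from, the published argument.
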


\begin{proof}
The only case worth considering is a reduction of some \emph{supercut}
$$
\langle r(L,(\mu_i(\overline{x}_i,\overline{\alpha}_i).c_i)_i) |
r(L',(\tmu_j(\overline{y}_j,\overline{\beta}_j).c'_j)_j) \rangle~.
$$
If $\pi$ and $e$ are the respective interpretations of
$r(L,(\mu_i(\overline{x}_i,\overline{\alpha}_i).X_i)_i)$ and
$r(L',(\tmu_j(\overline{y}_j,\overline{\beta}_j).Y_j)_j)$ and $c$ is
a normal form of $\langle \pi | e \rangle$, then the supercut reduces to
$c[(c_i \slash
X_i)_i,(c'_j \slash Y_j)_j]$.
By definition of the interpretations, the judgements
$(X_i \typed (\overline{x}_i : \Gamma_i \tez \overline{\alpha}_i :
\Delta_i))_i \Vdash\,\tez \pi : \varphi$
and
$(Y_j \typed (\overline{y}_j : \Gamma'_j \tez \overline{\beta}_j :
\Delta'_j))_j \Vdash e : \varphi \tez$ are well-typed in open $\lmmt$.
Therefore by subject reduction in open $\lmmt$
$$
(X_i \typed (\overline{x}_i : \Gamma_i \tez \overline{\alpha}_i :
\Delta_i))_i \, ; \, (Y_j \typed (\overline{y}_j : \Gamma'_j \tez \overline{\beta}_j :
\Delta'_j))_j \Vdash c \, \typed \, \tez
$$
is also well-typed in open $\lmmt$.
Then a simple substitution lemma on command variables\footnote{not detailed
here for simplicity} proves that if the command
$$\langle r(L,(\mu_i(\overline{x}_i,\overline{\alpha}_i).c_i)_i) |
r(L',(\tmu_j(\overline{y}_j,\overline{\beta}_j).c'_j)_j) \rangle$$ has a
certain type, then so does the command $c[(c_i \slash X_i)_i,(c'_j \slash Y_j)_j]$.
\end{proof}

\begin{theo}[Strong Normalisation]
For all $\mathcal{R}$ satisfying hypothesis \ref{hypo},
$\rightarrow_{\lmmtR}$ is strongly normalising on commands, terms and
environments that are well-typed in $\lmmtR$.
\end{theo}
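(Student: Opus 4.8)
The plan is to reduce the statement to strong normalisation of the plain $\lmmt$-calculus, which holds on well-typed terms by \cite{Polonovski04}, through a translation $\trace{\cdot}$ mapping every proofterm of $\lmmtR$ to a proofterm of $\lmmt$ in such a way that every $\rightarrow_{\lmmtR}$-step is simulated by \emph{at least one} step of the reduction of Figure~\ref{subfig:lmmtReductions}. The translation is the obvious one: it is the identity on the constructs inherited from $\lmmt$, recursing into the subterms, and it unfolds each superdeduction construct into its interpretation, so that $\trace{r(L,(\mu_i(\overline{x}_i,\overline{\alpha}_i).c_i)_i)} = \pi[(\trace{c_i}\slash X_i)_i]$, where $\pi$ is the distinguished interpretation of $r(L,(\mu_i(\overline{x}_i,\overline{\alpha}_i).X_i)_i)$, and symmetrically for the left construct. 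Since the $c_i$ are proper subterms this is a well-founded structural recursion, and its output is a genuine $\lmmt$ proofterm.

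First I would establish that $\trace{\cdot}$ preserves typing; this is where Hypothesis~\ref{hypo} is genuinely used. The interpretation $\pi$ carries the type $\varphi$ whereas the construct it replaces carries the type $P$, so the two typings match only once $P$ and $\varphi$ are identified. I would therefore also translate formul\ae{}, letting $\trace{A}$ be the $\mathcal{R}$-normal form of $A$, which exists and is unique by weak normalisation and confluence. Because proposition rewrite rules rewrite atoms only, $\trace{\cdot}$ commutes with the connectives --- in particular $\trace{P} = \trace{\varphi}$ whenever $P\rightarrow\varphi$ belongs to $\mathcal{R}$ --- and because no first-order function symbol occurs in left-hand sides it also commutes with first-order substitution. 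One then checks, by induction on typing derivations, that a proofterm well-typed in $\lmmtR$ with some judgement is mapped to a proofterm well-typed in $\lmmt$ with the judgement obtained by $\mathcal{R}$-normalising every formula; the superdeduction cases use the well-typedness of the interpretations in open $\lmmt$ together with the fact that $\mathcal{R}$-normalising all formulae in a $\lmmt$ typing derivation yields again a valid derivation (because $\mathcal{R}$ rewrites atoms only and adds no free first-order variables, so that the translation commutes with the connectives and with first-order substitution and does not invalidate eigenvariable conditions). It is convenient to pick the distinguished interpretations so that each command variable really occurs in them --- which holds for the interpretations produced by the completeness argument, since they come from cut-free, contraction-free derivations whose open leaves are exactly the premisses --- and so that they depend uniformly on the first-order instantiation parameters carried by the trace, so that the interpretation of $r(L[\fot\slash\fox],\dots)$ is the $[\fot\slash\fox]$-instance of the interpretation of $r(L,\dots)$.

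The heart of the argument is the simulation lemma. After checking that $\trace{\cdot}$ commutes with the substitutions $[\pi'\slash x]$, $[e\slash\alpha]$ and $[\fot\slash\fox]$ (the last one via the commutation of the formula translation with first-order substitution), the rules of Figure~\ref{subfig:lmmtReductions} are simulated step by step. For a supercut $\langle r(L,(\mu_i(\overline{x}_i,\overline{\alpha}_i).c_i)_i) \mid r(L',(\tmu_j(\overline{y}_j,\overline{\beta}_j).c'_j)_j)\rangle$ reducing to $c[(c_i\slash X_i)_i,(c'_j\slash Y_j)_j]$, where $c$ is a normal form of $\langle\pi\mid e\rangle$, the translation of the redex is $\langle\pi\mid e\rangle[(\trace{c_i}\slash X_i)_i,(\trace{c'_j}\slash Y_j)_j]$; since $\langle\pi\mid e\rangle$ reduces to $c$ and delayed substitutions turn into primitive substitutions upon instantiation, this reduces in $\lmmt$ to $c[(\trace{c_i}\slash X_i)_i,(\trace{c'_j}\slash Y_j)_j]$, which is exactly the translation of the reduct. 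This takes at least one step, because $\langle\pi\mid e\rangle$ is never a normal form: $\pi$ has an empty left typing context and $e$ an empty right one, so a short case analysis on the shape of $\varphi$ shows that $\langle\pi\mid e\rangle$ is always either a $\mu$-, a $\tmu$- or a $\lambda$-redex. Reductions occurring strictly inside some premiss $c_i$ or $c'_j$ are simulated because $X_i$, respectively $Y_j$, occurs in $\pi$, respectively $e$.

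Putting these together, an infinite $\rightarrow_{\lmmtR}$-sequence starting from a proofterm well-typed in $\lmmtR$ would, by the simulation lemma, yield an infinite reduction sequence of $\lmmt$ starting from a proofterm well-typed in $\lmmt$, contradicting \cite{Polonovski04}; hence no such infinite sequence exists. I expect the main difficulty to lie in the supercut case: getting the commutations of $\trace{\cdot}$ with the command-variable substitutions and with the delayed substitutions exactly right, and verifying that the mismatch between the type $P$ of a superdeduction construct and the type $\varphi$ of its interpretation is really absorbed by $\mathcal{R}$-normalising the types --- the one point where Hypothesis~\ref{hypo} cannot be avoided.
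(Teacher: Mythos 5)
Your proposal is correct and follows essentially the same route as the paper: the translation $\trace{\cdot}$ is exactly the paper's $\nfe$ (normalising away superdeduction constructs by substituting their interpretations) combined with $\mathcal{R}$-normalisation $\nfp$ of types, and the conclusion is drawn from the same strict simulation ($\rightarrow^+$) into $\lmmt$ plus Polonovski's strong normalisation result. You even make explicit two points the paper leaves implicit — that $\langle\pi\mid e\rangle$ is never a normal form so a supercut step yields at least one $\lmmt$ step, and that each command variable must actually occur in the chosen interpretation for reductions inside premisses to be simulated strictly.
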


\begin{proof}
Hypothesis \ref{hypo} implies that any formula $\varphi$ has a unique normal
form for $\mathcal{R}$ that we denote $\varphi \nfp$. Let us denote
$\rightarrow_e$ the rewrite relation defined by replacing extended terms for
superdeduction by their interpretations.
$$
\begin{array}{l@{~\rightarrow_e~}l}
r(L,(\mu_i(\overline{x}_i,\overline{\alpha}_i).c_i)_i) & \pi[c_i\slash X_i] \\
r(L',(\tmu_j(\overline{y}_j,\overline{\beta}_j).c'_j)_j) & e[c'_j\slash Y_j] \\
\multicolumn{2}{c}{\dots}
\end{array}
$$
Such a rewrite relation is strongly normalising and confluent, therefore
yielding for any extended command $c$, term $\pi$ or environment $e$ a normal
form denoted $c \nfe$, $\pi \nfe$ of $e \nfe$. Such normal forms are raw
$\lmmt$ commands, terms or environments. Strong normalisation of our extended
cut-elimination reduction comes from the facts that
\begin{inparaenum}[\bf 1.]
\item $c \typed \Gamma \tez \Delta$ well-typed in our extended type system
implies that $c \nfe \typed (\Gamma) \nfp \tez (\Delta) \nfp$ well-typed in
$\lmmt$~;
\item $\Gamma \tez \pi : A \mid \Delta$ well-typed in our extended type system
implies that $(\Gamma) \nfp \tez \pi \nfe : A \nfp \mid (\Delta) \nfp$
well-typed in $\lmmt$~;
\item $\Gamma \mid e : A \tez \Delta$ well-typed in our extended type system
implies that $(\Gamma) \nfp \mid e \nfe : A \nfp \tez (\Delta) \nfp$
well-typed in $\lmmt$~;
\item $c \rightarrow c'$ implies $c \nfe \rightarrow^+ c' \nfe$
\item $\pi \rightarrow \pi'$ implies $\pi \nfe \rightarrow^+ \pi' \nfe$
\item $e \rightarrow e'$ implies $e \nfe \rightarrow^+ e' \nfe$.
\end{inparaenum}
The hypothesis on first-order function symbols (see Hypothesis \ref{hypo}) is
crucial in establishing points 1 to 3: indeed for any formula $\varphi$ and
any first-order substitution $\sigma$, it must be the case that $(\varphi
\nfp)\sigma = (\varphi\sigma)\nfp$.
These six points (combined with Theorem \ref{theo:sr}) demonstrate that
through $\nfe$ and $\nfp$, the $\lmmt$-calculus simulates our extended
calculus: any well-typed reduction in our extended calculus induces through
$\nfe$ and $\nfp$ a longer well-typed reduction in $\lmmt$. Strong
normalisation of $\lmmt$ therefore implies strong normalisation of our
extended reduction.
\end{proof}

The end of this section is dedicated to a pathological example for
superdeduction: the proposition rewrite rule
$$
r~:~P \rightarrow (\exists \fox_1. \forall \fox_2.
A(\fox_1,\fox_2 )) \vee (\exists \foy_1. \forall \foy_2. B(\foy_1, \foy_2))
$$
whose \emph{most general} superdeduction rules are
$$
\inferrule*[right=$\left\{\begin{array}{l} \fox_2 \notin
\fv(\Gamma\comma\Delta\comma \fou ) \\ \foy_2 \notin
\fv(\Gamma\comma\Delta)\end{array}\right.$]
{\Gamma \tez A(\fot, \fox_2), B(\fou, \foy_2), \Delta}
{\Gamma \tez P, \Delta}
\quad\mbox{and}\quad
\inferrule*[right=$\left\{\begin{array}{l} \fox_2 \notin
\fv(\Gamma\comma\Delta) \\ \foy_2 \notin
\fv(\Gamma\comma\Delta\comma \fot )\end{array}\right.$]
{\Gamma \tez A(\fot, \fox_2), B(\fou, \foy_2), \Delta}
{\Gamma \tez P, \Delta}~.
$$
The original proofterm extension \cite{BraunerHK07} transforms these two
inferences into a unique proofterm\linebreak $rR(\lambda \fox_2. \lambda \foy_2.
(\lambda \alpha. \lambda \beta.  m), \fot, \fou,\gamma)$. It is obviously
inaccurate with respect to the scope of $\fox_2$ and $\foy_2$: in the
proofterm there is no mention that either $\fot$ is not in the scope of
$\foy_2$ or $\fou$ is not in the scope of $\fox_2$. This fact is not reflected
in the pure syntax but in the typing rules
$$
\inferrule*[right=$\left\{\begin{array}{l} \fox_2 \notin
\fv(\Gamma\comma\Delta\comma \fou ) \\ \foy_2 \notin
\fv(\Gamma\comma\Delta)\end{array}\right.$]
{m\typed\Gamma \tez \alpha:A(\fot, \fox_2), \beta:B(\fou, \foy_2), \Delta}
{rR(\lambda \fox_2. \lambda \foy_2. (\lambda \alpha. \lambda \beta. m), \fot,
\fou,\gamma)\typed\Gamma \tez \gamma:P, \Delta}
$$
and
$$
\inferrule*[right=$\left\{\begin{array}{l} \fox_2 \notin
\fv(\Gamma\comma\Delta) \\ \foy_2 \notin
\fv(\Gamma\comma\Delta\comma \fot )\end{array}\right.$]
{m\typed\Gamma \tez \alpha:A(\fot, \fox_2), \beta:B(\fou, \foy_2), \Delta}
{rR(\lambda \fox_2. \lambda \foy_2. (\lambda \alpha. \lambda \beta. m), \fot,
\fou,\gamma)\typed\Gamma \tez \gamma:P, \Delta}~.
$$

Let us see how this mistake is corrected in our extension of the
$\lmmt$-calculus. Traces for the superdeduction inferences are respectively
$\fou!\,\foy_2?\,\fot!\,\fox_2?$ and
$\fot!\,\fox_2?\,\fou!\,\foy_2?$. These traces clearly specify that
whether $\fot$ is not in the scope of $\foy_2$ or $\fou$ is not in the scope of
$\fox_2$. Our extension of the $\lmmt$-calculus translates these superdeduction
inferences into the typing rules
$$
\inferrule*[right=$\left\{\begin{array}{l} \fox_2 \notin
\fv(\Gamma\comma\Delta\comma \fou ) \\ \foy_2 \notin
\fv(\Gamma\comma\Delta)\end{array}\right.$]
{c\typed\Gamma \tez \alpha:A(\fot, \fox_2), \beta:B(\fou, \foy_2), \Delta}
{\Gamma \tez r(\fou!\,\foy_2?\,\fot!\,\fox_2? ,\mu(\alpha,\beta).c):P \mid \Delta}
$$
and
$$
\inferrule*[right=$\left\{\begin{array}{l} \fox_2 \notin
\fv(\Gamma\comma\Delta) \\ \foy_2 \notin
\fv(\Gamma\comma\Delta\comma \fot )\end{array}\right.$]
{c\typed\Gamma \tez \alpha:A(\fot, \fox_2), \beta:B(\fou, \foy_2), \Delta}
{\Gamma \tez r(\fot!\,\fox_2?\,\fou!\,\foy_2? ,\mu(\alpha,\beta).c):P\mid
\Delta}~.
$$
The proofterms (and the typing rules) reflect the scope of the
eigenvariables. The \emph{interpretation} of
$r(\fou!\,\foy_2?\,\fot!\,\fox_2? ,\mu(\alpha,\beta).c)$ is by definition a
term well-typed in $\lmmt$ whose trace is $\fou!\,\foy_2?\,\fot!\,\fox_2?$
and the \emph{interpretation} of $r(\fot!\,\fox_2?\,\fou!\,\foy_2?
,\mu(\alpha,\beta).c)$ is by definition a term well-typed in $\lmmt$ whose
trace is $\fot!\,\fox_2?\,\fou!\,\foy_2?$. This trace restriction implies that
$r(\fou!\,\foy_2?\,\fot!\,\fox_2? ,\mu(\alpha,\beta).c)$ and
$r(\fot!\,\fox_2?\,\fou!\,\foy_2? ,\mu(\alpha,\beta).c)$ behave differently
with respect to cut-elimination.

\section{Conclusion}

This extension of the $\lmmt$-calculus is a first step towards a
computational interpretation of superdeduction. Indeed it refutes the idea
\cite{BraunerHK07} that Christian Urban's calculus is a better basis for a
proofterm language for superdeduction: $\lmmt$ syntax, typing and reduction is
as suitable as Christian Urban's calculus for superdeduction. The
extension presented in this short paper is almost a mechanical transcription of
the original extension \cite{BraunerHK07}. It relates superdeduction more
closely to the $\lambda$-calculus based Curry-Howard correspondence without
exploring any further the computational content of cut-elimination for
superdeduction.

We believe that one of the key ingredients towards this goal is
pattern-matching. Indeed superdeduction systems historically come from
\emph{supernatural deduction} \cite{WackThese05}, an extension of natural
deduction designed to type the rewriting-calculus (\emph{a.k.a.}
$\rho$-calculus) \cite{rhoCalIGLP-I+II-2001}. Supernatural deduction turns
proposition rewrite rules of the form
$$
r~:~
P \quad \rightarrow \quad \forall \bar{\fox}. ((A_1 \land A_2 \dots A_n) \imply C)
$$
into inference rules for natural deduction
$$
\inferrule*[right=$\bar{\fox} \notin \fv(\Gamma)$]
{\Gamma, A_1\dots A_n \tez C}
{\Gamma \tez P}
\qquad\mbox{and}\qquad
\inferrule*[]
{\Gamma \tez P \\ (\Gamma \tez A_i[\bar{\fot}\slash\bar{\fox}])_i}
{\Gamma \tez C[\bar{\fot}\slash\bar{\fox}]}~.
$$
(The first rule is an introduction rule and the second is an elimination
rule.) The rewriting calculus is an extension of the $\lambda$-calculus where
rewrite rules replace lambda-abstractions.
The idea underlying the relation between supernatural deduction and rewriting
calculus is that the proposition rewrite rule $r$ corresponds to a specific
pattern $r(\bar{\fox},x_1\dots x_n)$. The introduction rule types an abstraction
on this pattern (\emph{i.e.} a rewrite rule)
$$
\inferrule*[right=$\bar{\fox} \notin\fv(\Gamma)$]
{\Gamma, x_1:A_1 \dots x_n:A_n \tez \pi : C}
{\Gamma \tez r(\bar{\fox},x_1\dots x_n) \rightarrow \pi : P}~.
$$
Dually the elimination rule types an application on this pattern
$$
\inferrule*[]
{\Gamma \tez \pi : P \\ (\Gamma \tez \pi_i : A_i[\bar{\fot}\slash
\bar{\fox}])_i}
{\Gamma \tez \pi \, r(\bar{\fot},\pi_1\dots \pi_n) : C}~.
$$
Supernatural deduction systems (in intuitionistic natural deduction) have
later been transformed into superdeduction systems (in classical sequent
calculus) in order to handle more general proposition rewrite rules.
This transformation from supernatural deduction to superdeduction systems
should not break the relation with pattern matching. Indeed cut-elimination in
sequent calculus relates to pattern matching \cite{CerritoK04}. Recent
analysis shows that the duality between patterns and terms reflects the duality
between phases in focused proof systems \cite{Zeilberger08}. Finally we
demonstrated \cite{Houtmann08,TheseClement} that superdeduction systems share
strong similarities with focused proof systems such as LKF
\cite{LiangM07,LiangM09}, a focused sequent calculus for classical logic.
Answers should naturally arise from the study of the computational content of
such focused systems \cite{Munch-Maccagnoni09,CurienM10}.

\bibliographystyle{eptcs}
\bibliography{paper}

\end{document}